\documentclass[a4paper, amsfonts, amssymb, amsmath, reprint, showkeys, twoside,superscriptaddress, onecolumn, nofootinbib]{revtex4-2}
\usepackage{amsmath,amsthm}
\usepackage{amssymb}
\usepackage{array}
\usepackage[export]{adjustbox}
\usepackage{rotating}
\usepackage{mathtools}
\usepackage{mathdots}
\usepackage{bm}
\usepackage{dsfont}
\usepackage{xcolor}
\usepackage[caption=false]{subfig}
\usepackage{float}
\makeatletter
\let\newfloat\newfloat@ltx
\makeatother
\usepackage{graphicx}

\usepackage{thmtools}
\usepackage{thm-restate}

\usepackage{mathrsfs}
\usepackage{nccmath}
\usepackage{physics}
\usepackage{comment}
\usepackage{ulem}
\usepackage[colorlinks=true, urlcolor=blue, linkcolor=blue, citecolor=blue]{hyperref}
\usepackage[capitalize]{cleveref}
\usepackage[acronym]{glossaries}
\setkeys{glslink}{hyper=false}
\usepackage{makecell}

\usepackage{tikz}
\usetikzlibrary{quantikz2}

\usepackage[compat=0.4]{yquant}
\useyquantlanguage{groups}

\newtheorem{theorem}{Theorem}

\usepackage{todonotes}

\newcommand{\bea}{\begin{eqnarray}}
\newcommand{\eea}{\end{eqnarray}}

\newacronym{qft}{QFT}{quantum Fourier transform}
\newacronym{qsp}{QSP}{quantum signal processing}
\newacronym{com}{COM}{centre of mass}
\newacronym[
  plural={DOFs},
  longplural={degrees of freedom}
]{dof}{DOF}{degree of freedom}
\newacronym{lcu}{LCU}{linear combination of unitaries}

\newcommand{\ceil}[1]{\left\lceil #1 \right\rceil}
\newcommand{\floor}[1]{\left\lfloor #1 \right\rfloor}

\usepackage{tikzit}

\tikzstyle{gate}=[shape=rectangle, text height=1.5ex, text depth=0.25ex, yshift=0mm, fill=white, draw=black, minimum height=5mm, minimum width=5mm, font={\small}, tikzit category=circuit]
\tikzstyle{big gate}=[shape=rectangle, text height=2.0ex, text depth=0.25ex, yshift=0mm, fill=white, draw=black, minimum height=20mm, minimum width=5mm, font={\small}, tikzit category=circuit]
\tikzstyle{Z dot}=[inner sep=0mm, minimum size=2mm, shape=circle, draw=black, fill={rgb,255: red,221; green,255; blue,221}, tikzit category=zx]
\tikzstyle{Z phase dot}=[minimum size=5mm, font={\footnotesize\boldmath}, shape=rectangle, rounded corners=2mm, inner sep=0.2mm, outer sep=-2mm, scale=0.8, tikzit shape=circle, draw=black, fill={rgb,255: red,221; green,255; blue,221}, tikzit draw=blue, tikzit category=zx]
\tikzstyle{X dot}=[Z dot, shape=circle, draw=black, fill={rgb,255: red,255; green,136; blue,136}, tikzit category=zx]
\tikzstyle{X phase dot}=[Z phase dot, tikzit shape=circle, draw=black, tikzit draw=blue, fill={rgb,255: red,255; green,136; blue,136}, font={\footnotesize\boldmath}, tikzit category=zx]
\tikzstyle{hadamard}=[fill=yellow, draw=black, shape=rectangle, inner sep=0.6mm, minimum height=1.5mm, minimum width=1.5mm, tikzit category=zx]
\tikzstyle{paulibox}=[fill={rgb,255: red,221; green,221; blue,255}, draw=black, shape=rectangle, inner sep=0.6mm, minimum height=5mm, minimum width=5mm, font={\footnotesize}, text height=1.5ex, text depth=0.25ex, tikzit category=zx]
\tikzstyle{vertex}=[inner sep=0mm, minimum size=1mm, shape=circle, draw=black, fill=black, tikzit category=misc]
\tikzstyle{vertex set}=[inner sep=0mm, minimum size=1mm, shape=circle, draw=black, fill=white, font={\footnotesize\boldmath}, tikzit category=misc]
\tikzstyle{small black dot}=[fill=black, draw=black, shape=circle, inner sep=0pt, minimum width=1.2mm, tikzit category=circuit]
\tikzstyle{cnot ctrl}=[fill=black, draw=black, shape=circle, inner sep=0pt, minimum width=1.2mm, tikzit category=circuit]
\tikzstyle{cnot targ}=[fill=white, draw=white, shape=circle, tikzit category=circuit, label={center:$\oplus$}, inner sep=0pt, minimum width=2.1mm, tikzit fill={rgb,255: red,102; green,204; blue,255}, tikzit draw=black]
\tikzstyle{ket}=[fill=white, draw=black, shape=regular polygon, regular polygon sides=3, regular polygon rotate=-30, scale=0.7, inner sep=1pt, tikzit category=circuit, tikzit shape=rectangle, tikzit fill=green]
\tikzstyle{bra}=[fill=white, draw=black, shape=regular polygon, regular polygon sides=3, regular polygon rotate=30, scale=0.7, inner sep=1pt, tikzit category=circuit, tikzit shape=rectangle, tikzit fill=red]
\tikzstyle{scalar}=[shape=rectangle, text height=1.5ex, text depth=0.25ex, yshift=-0.5mm, fill=white, draw=black, minimum height=5mm, minimum width=5mm, font={\small}]
\tikzstyle{clabel}=[fill=white, draw=none, shape=rectangle, tikzit fill={rgb,255: red,56; green,255; blue,242}, font={\footnotesize}, inner sep=1pt, tikzit category=labels]
\tikzstyle{empty diagram}=[draw={gray!40!white}, dashed, shape=rectangle, minimum width=1cm, minimum height=1cm, tikzit category=misc]
\tikzstyle{amap}=[fill=white, draw=black, shape=NEbox, tikzit category=asymmetric, tikzit fill=yellow, tikzit shape=rectangle]
\tikzstyle{amap conj}=[fill=white, draw=black, shape=NWbox, tikzit category=asymmetric, tikzit fill=green, tikzit shape=rectangle]
\tikzstyle{amap adj}=[fill=white, draw=black, shape=SEbox, tikzit category=asymmetric, tikzit fill=red, tikzit shape=rectangle]
\tikzstyle{amap trans}=[fill=white, draw=black, shape=SWbox, tikzit category=asymmetric, tikzit fill=orange, tikzit shape=rectangle]
\tikzstyle{astate}=[fill=white, draw=black, shape=NEtriangle, tikzit category=asymmetric, tikzit shape=circle, tikzit fill=yellow]
\tikzstyle{astate conj}=[fill=white, draw=black, shape=NWtriangle, tikzit category=asymmetric, tikzit shape=circle, tikzit fill=green]
\tikzstyle{astate adj}=[fill=white, draw=black, shape=SEtriangle, tikzit category=asymmetric, tikzit shape=circle, tikzit fill=red]
\tikzstyle{astate trans}=[fill=white, draw=black, shape=SWtriangle, tikzit category=asymmetric, tikzit shape=circle, tikzit fill=orange]
\tikzstyle{tri}=[fill={rgb,255: red,255; green,136; blue,136}, tikzit fill=red, draw=black, shape=regular polygon, regular polygon sides=3, regular polygon rotate=30, scale=0.7, inner sep=0.2pt, minimum size=7mm, tikzit shape=rectangle]
\tikzstyle{tri left}=[fill={rgb,255: red,255; green,136; blue,136}, tikzit fill=red, draw=black, shape=regular polygon, regular polygon sides=3, regular polygon rotate=210, scale=0.7, inner sep=0.2pt, minimum size=7mm, tikzit shape=rectangle]
\tikzstyle{med gate}=[fill=white, draw=black, shape=rectangle, minimum width=5mm, tikzit category=circuit, minimum height=15mm, font={\small}]
\tikzstyle{measurement}=[fill=white, draw=black, shape=rectangle, minimum width=0.5cm, minimum height=0.5cm, tikzit category=circuit, path picture={\draw[black] ([xshift=-2.5mm,yshift=-1.5mm]path picture bounding box.center) arc[start angle=180,end angle=0,radius=2.5mm]; \draw[black,->] ([yshift=-1.5mm]path picture bounding box.center) -- ([xshift=2mm,yshift=1.5mm]path picture bounding box.center);}]
\tikzstyle{control box}=[fill=white, draw=black, shape=rectangle, minimum height=2mm, minimum width=2mm, font={\small}, tikzit category=circuit, yshift=0mm]

\tikzstyle{hadamard edge}=[-, dashed, dash pattern=on 2pt off 0.5pt, thick, draw={rgb,255: red,68; green,136; blue,255}]
\tikzstyle{box edge}=[-, dash pattern=on 2pt off 0.5pt, thick, draw={rgb,255: red,203; green,192; blue,225}, dashed]
\tikzstyle{brace edge}=[-, tikzit draw=blue, decorate, decoration={brace,amplitude=1mm,raise=-1mm}]
\tikzstyle{diredge}=[->]
\tikzstyle{double edge}=[-, double, shorten <=-1mm, shorten >=-1mm, double distance=2pt]
\tikzstyle{gray edge}=[-, {gray!60!white}]
\tikzstyle{pointer edge}=[->, very thick, gray]
\tikzstyle{boldedge}=[-, line width=1.6pt, shorten <=-0.17mm, shorten >=-0.17mm]
\tikzstyle{bidir edge}=[<->, very thick, draw={rgb,255: red,191; green,191; blue,191}]

\begin{document}
\author{Danial Motlagh}
\affiliation{Xanadu, Toronto, ON, M5G 2C8, Canada}
\author{Matthew Pocrnic}
\affiliation{Xanadu, Toronto, ON, M5G 2C8, Canada}
\begin{abstract}
    Table lookup, often referred to as quantum read only memory (QROM), is one of the most widely used subroutines in quantum algorithms, and constitutes the majority share of algorithmic overheads in most practical applications of quantum computers. It involves the coherent loading of $N$ bitstrings of length $b$ in superposition, and naively has a non-Clifford cost of $N$ Toffolis. It is known that given access to $b\, \lambda$ dirty qubits, one can reduce the Toffoli cost of QROM to $2\frac{N}{\lambda} + 4b(\lambda - 1)$. In this work, we first present an optimization to reduce this cost to $2\frac{N}{\lambda} + 2b(\lambda - 1) + 2\lambda-6$ by replacing the ``SelectSwap" architecture with ``SelectCopy". We then provide a further optimization for the qubit-constrained regime where the Toffoli cost is typically $\sim 2\frac{N}{\lambda}$, and reduce it to $\sim (1+\frac{1}{b})\frac{N}{\lambda}$, cutting the cost by approximately $50\%$ and effectively matching the performance of clean-qubit QROM using dirty qubits for practical values of $b$. Lastly, we provide a parametric family of methods that allow the interpolation of the prefactor of the $\frac{N}{\lambda} $ term from $2$ to ($\, 1+\frac{1}{b}\,$) to obtain the best cost for different qubit availability regimes.
\end{abstract}
\title{Halving the cost of QROM}
\maketitle

\section{Introduction}
Quantum read only memory (QROM) provides the means for coherently accessing classical data on a quantum computer, and is the powerhouse of most quantum algorithms. It accounts for the majority of algorithmic resource requirements in Hamiltonian simulation \cite{berry2019qubitization,von2021quantum, lee2021even} via quantum signal processing \cite{low2017optimal,low2019hamiltonian, motlagh2024generalized, berry2024doubling}, state preparation \cite{babbush2018encoding, fomichev2024initial, rupprecht2026sparse}, unitary synthesis \cite{low2024trading, alonso2025quantum, kottmann2026parameter}, and quantum solvers for differential equations \cite{jennings2024cost, pocrnic2025constant, penuel2024detailed} and linear systems \cite{an2022quantum, dalzell2024shortcut, jennings2025randomized}. Therefore, any improvements in the efficiency of this fundamental algorithmic subroutine directly translates to reduction of overheads in almost all practical applications of quantum computers. Despite this, to the best of our knowledge, no improvements have been found for this key subroutine for over half a decade. \\

QROM aims to solve the problem of coherent table lookup. That is, given a ($\log_2 N$)-qubit address register in superposition, load a classical bitstring $f(x)$ of size $b$ corresponding to each of the $N$ possible computational basis states $\ket{x}$: $\sum_{x=0}^{N-1} \psi_x \ket{x}\ket{0} \to  \sum_{x=0}^{N-1} \psi_x \ket{x} \ket{f(x)}$. Naively, this can be implemented using $N$ back-to-back ($\log_2 N$)-controlled NOT gates with a total Toffoli cost of $N(\log_2 N - 1)$. This was further improved using the idea of unary iteration \cite{babbush2018encoding}, which ``streams" the bits of the one-hot encoding of the address register to reduce the Toffoli count to $N$. The next major improvement came with the advent of ``SelectSwap" techniques that allow for the utilization of ancillary qubits to reduce non-Clifford costs. In particular, the method introduced by Low et al. \cite{low2024trading} is able to reduce the cost of QROM to $\frac{N}{\lambda} + b\lambda$ using $b(\lambda-1)$ additional clean ancillary qubits. To avoid the need for the addition of a large number additional qubits, Ref. \cite{low2024trading} also introduces a construction based on dirty ancilla qubits with a Toffoli cost of $2\frac{N}{\lambda} + 4b\lambda$ using $b\lambda$ dirty ancilla qubits. Further improvements by Berry et al. \cite{berry2019qubitization} brought the Toffoli requirements to $2\frac{N}{\lambda} + 4b(\lambda - 1)$ and dirty ancilla counts to $b(\lambda-1)$, which still incurs a multiplicative factor of 2 due to the use of dirty ancillas. This has remained the state-of-the-art for the past seven years.\\

In this work we introduce a new construction for QROM that effectively halves its implementation cost. We first replace the ``SelectSwap" architecture with a ``SelectCopy" one to reduce the cost from $\,2\frac{N}{\lambda} + 4b(\lambda - 1)\,$ to $\,2\frac{N}{\lambda} + 2b(\lambda - 1) +2\lambda-6\,$ using the same $b(\lambda-1)$ dirty ancilla count. We then introduce an optimized algorithm for performing $m$ back-to-back QROMs with Toffoli cost $(m+1)\left(\frac{N}{\lambda} + b(\lambda - 1) +\lambda-3\right)$. We apply this new algorithm to the case of a single table lookup by treating a $b$-bit QROM as $\alpha$ back-to-back $\frac{b}{\alpha}$-bit QROMs, achieving a Toffoli count of $(1+\frac{1}{\alpha})\frac{N}{\lambda} + (b+\frac{b}{\alpha})(\alpha\lambda - 1) + (\alpha+1)(\alpha\lambda-3)$. This yields a parametric family of methods that interpolates the prefactor of the $\frac{N}{\lambda} $ term between $2$ to $(1+\frac{1}{b})$ at the cost of increasing subleading terms. This is particularly useful in regimes of practical interest, where the number of available dirty qubits is limited and $\frac{N}{\lambda} $ is the dominating term. When $N \gg b^2 \lambda^2$, setting $\alpha = b$ becomes optimal. This yields a total cost of $\sim (1+\frac{1}{b})\frac{N}{\lambda}$, reducing the cost by approximately $50\%$ and effectively matching the performance of clean-qubit QROM using dirty qubits.\\

\begin{figure}[t!]
    \centering
    \hspace*{-0.85\linewidth}
    \subfloat{
    \includegraphics[width=0.85\linewidth]{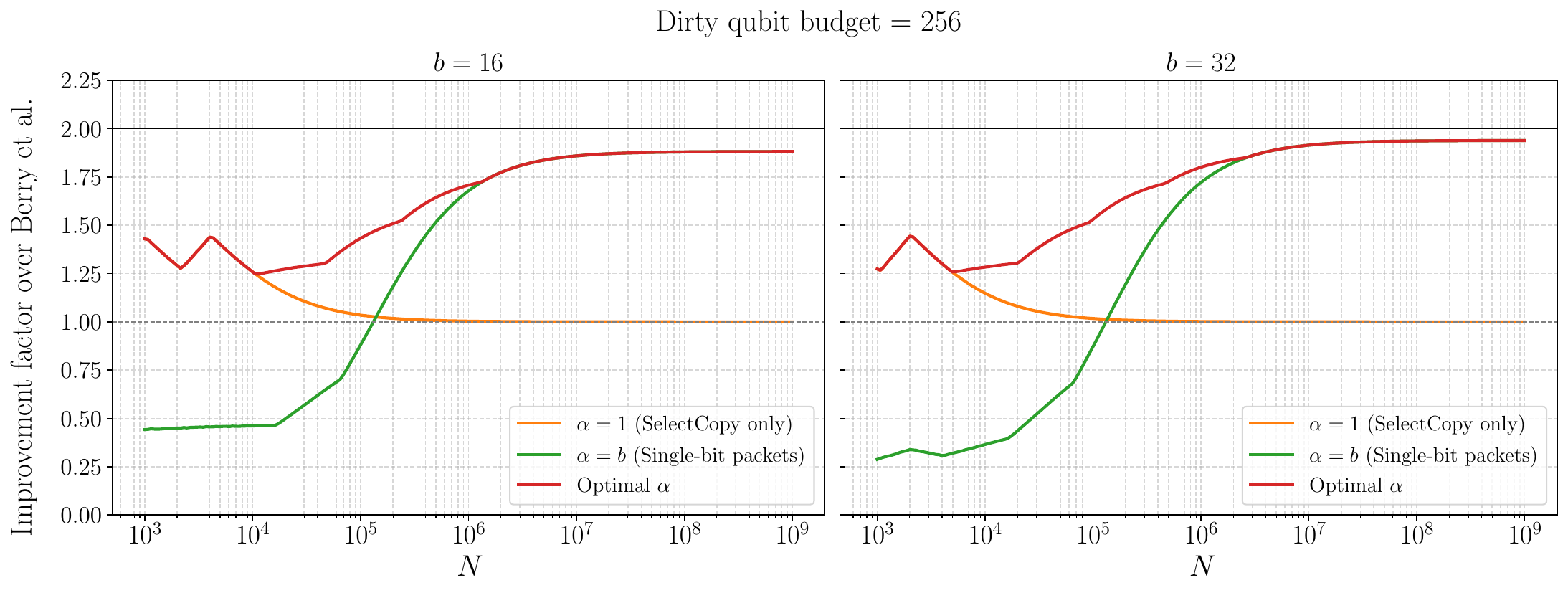}}
    
    \vspace{-0.8em}
    \hspace*{-0.85\linewidth}
    \subfloat{
    \includegraphics[width=0.85\linewidth]{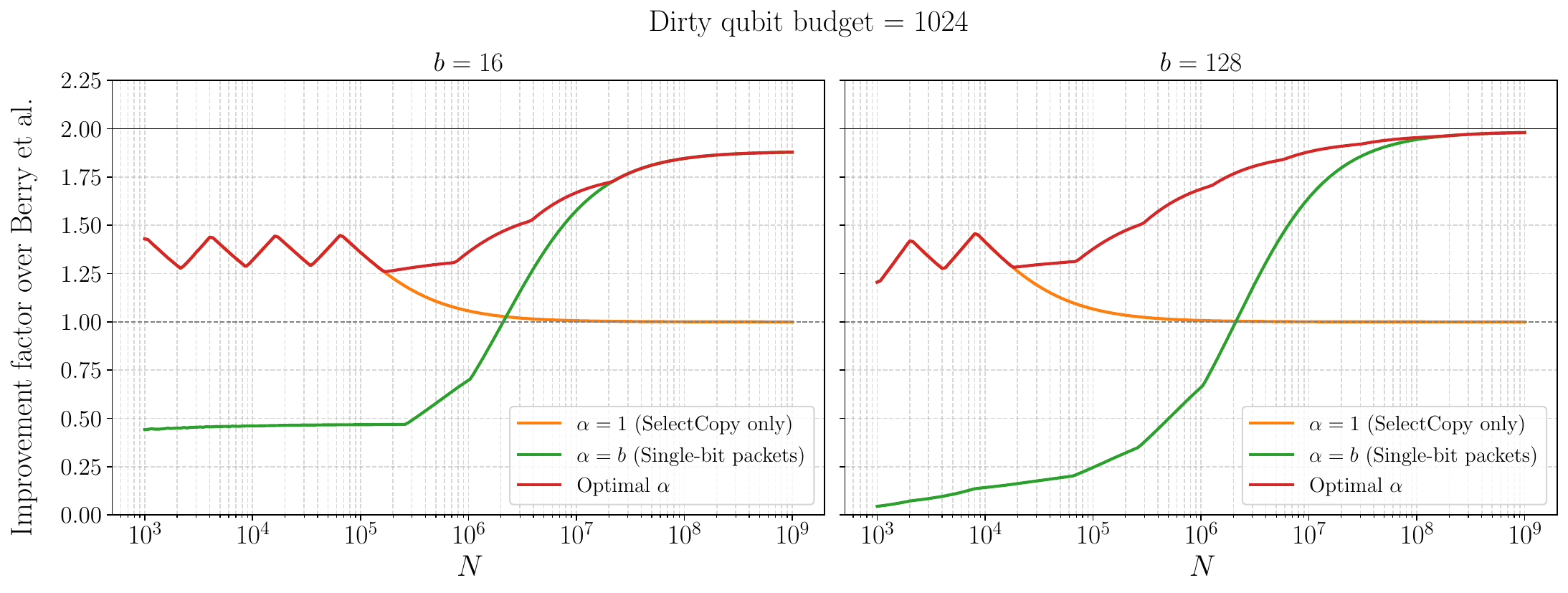}}
    \vspace{-0.8em}
    \caption{Cost comparison of our improved QROM against the previous state-of-the-art. We plot the improvement factor of our algorithm as a function of the number of elements $N$ being loaded for different bitstring length $b$ and number of available dirty qubits. We plot the improvement factor for the 2 ends of the spectrum $\alpha = 1$ (orange) and $\alpha = b$ (green) in the parametric family of methods provided in this work as well as for the optimal choice of $\alpha$ (red). The jagged lines at the beginning correspond to the qubit unconstrained regime where the optimal $\lambda$ does not yet make use of the entire dirty qubit budget.}
    \label{fig:qrom_comparison}
\end{figure}
\section{algorithm}
We now describe our improved QROM construction. For simplicity, in the rest of this section we assume $N$, $b$, and $\lambda$ are powers of 2; however, $N$ and $b$ can generally be arbitrary integers as shown in Appendix \ref{sec:non_power}.

\vspace{-0.2cm}
\subsection{Previous approaches}
QROM aims to solve the problem of coherent table lookup, that is, it implements
\begin{equation} \label{eq:data_load}
    \sum_{x=0}^{N-1} \psi_x \ket{x}\ket{0} \to  \sum_{x=0}^{N-1} \psi_x \ket{x} \ket{f(x)},
\end{equation}
where $f(x)$ is pre-computed classically. The high-level idea behind using ancilla qubits for speeding up QROM, as introduced in Ref. \cite{low2024trading}, is to realize that we can interpret the ($\log_2 N$)-qubit address register as two registers of size $\log_2 \lambda$ and $\log_2 N - \log_2 \lambda$
\begin{equation} \label{eq:data_load_2}
    \sum_{q=0}^{\frac{N}{\lambda}-1} \,\sum_{r=0}^{\lambda-1} \psi_{q,r} \ket{q}\ket{r}\ket{0} \to  \sum_{q=0}^{\frac{N}{\lambda}-1} \,\sum_{r=0}^{\lambda-1} \psi_{q,r} \ket{q}\ket{r}\ket{f(q,r)},
\end{equation}
where $x = q\cdot \lambda + r$. Then for each $\ket{q}$ we load all $\ket{f(q,r)}$ values for $r=0,...,\lambda -1$ in an ancillary register of size $b\lambda$ and swap the correct $\ket{f(q,r)}$ value into the first $b$ qubits based on the state of the $\ket{r}$ register.\\
\begin{figure}[t!]
    \centering

    \subfloat[]{
    \begin{minipage}{1\textwidth}
    \centering
    \ctikzfig{berry_qroam}
    \end{minipage}}

    \subfloat[]{
    \begin{minipage}{1\textwidth}
    \centering
    \ctikzfig{qroam}
    \end{minipage}}

    \caption{Comparison of the (a) ``SelectSwap" QROM construction taken from Appendix A of Ref. \cite{berry2019qubitization} vs (b) our ``SelectCopy" construction. $\textsc{Sel}_1$ directly loads $f(q,0)$ into the clean register and $f(q,r) \oplus f(q,0)$ into the $r^{th}$ dirty register for $r=1,...,\lambda -1$, $\textsc{Sel}_2$ is similar but only goes over $r=1,...,\lambda -1$. This construction corresponds to $\alpha = 1$ in the parametric family of methods provided in Section \ref{sec:halving_qrom} of this work.}
    \label{fig:qrom_comparison_circ}
\end{figure}

Doing this with dirty ancillas uses the fact that for any quantum state $\ket{\phi}$, $\ket{\phi \oplus \, \phi \oplus \,f(q,r)} = \ket{f(q,r)}$, where $\oplus$ is defined as the bit-wise XOR product between 2 bitstrings. Hence, to replace the clean ancillas in the above approach with dirty ones, one would
\begin{enumerate}
    \item Swap the $r^{th}$ $b$-qubit ancilla register with the zeroth one based on the state of the $\ket{r}$ register.
    \item Copy the value of the zeroth register into a clean register.
    \item Swap the $r^{th}$ register with the zeroth one based on the state of the $\ket{r}$ register.
    \item Load all $f(q,r)$ values for $r=0,...,\lambda -1$ in the ancillary registers based on the state of the $\ket{q}$ register.
    \item Swap the $r^{th}$ $b$-qubit ancilla register with the zeroth one based on the state of the $\ket{r}$ register.
    \item Copy the value of the zeroth register into a clean register.
    \item Swap the $r^{th}$ register with the zeroth one based on the state of the $\ket{r}$ register.
    \item Unload all $f(q,r)$ values for $r=0,...,\lambda -1$ in the ancillary registers based on the state of the $\ket{q}$ register.
\end{enumerate}
This was the approach proposed in \cite{low2024trading} with Toffoli cost $2\frac{N}{\lambda} + 4b\lambda$ using $b\lambda$ dirty ancilla qubits. This was slightly refined in \cite{berry2019qubitization} to achieve Toffoli cost $2\frac{N}{\lambda} + 4b(\lambda - 1)$ and dirty ancilla counts of $b(\lambda-1)$. We now present our first improvement over this technique.

\begin{figure}[t!]
    \centering
    \ctikzfig{iterative_qrom} 
    \caption{Our QROM construction with sequential bit packets for the case of $\alpha = b$. Note that the dirty qubit registers, as well as the qubits in the output registers, are now depicted as single qubit registers. \textsc{Sel} and \textsc{Copy} operations are now loading and copying single qubit states into the output register per iteration. Recall that following $\textsc{Sel}_1$ operation, each subsequent $\textsc{Sel}_j$ loads the XOR between the new data and the prior bit in each dirty register. Finally, the ``Restore" operation refers to the final dirty-qubit clean-up, which is then copied into the output register to return the desired output. This operation is detailed in Appendix \ref{sec:restore}. This new construction is now available in PennyLane \cite{bergholm2018pennylane}.}
    \label{fig:one_bit_qrom}
\end{figure}

\subsection{Improvement with copy}\label{sec:select_copy}
Our first optimization comes from the realization that swapping the $r^{th}$ ancilla register with the zeroth register based on the state of the $\ket{r}$ register, copying the zeroth register into a clean register, and then swapping back is equivalent to copying the $r^{th}$ ancilla register into the clean register based on the state of the $\ket{r}$ register
\begin{align}
    \textsc{Copy}\ket{q}\ket{r}\ket{0}\bigotimes_j \ket{\phi_j}  &\to \ket{q}\ket{r}\ket{0\oplus\phi_r}\bigotimes_j \ket{\phi_j},\nonumber\\
    &= \ket{q}\ket{r}\ket{\phi_r}\bigotimes_j \ket{\phi_j},
\end{align}
and doing this multiplexed copy directly has half the Toffoli cost of 2 multiplexed swaps and requires no CNOTs (of which the previous approach requires $4b\lambda$). This already reduces the Toffoli cost of QROM from $\,2\frac{N}{\lambda} + 4b(\lambda - 1)\,$ to $\,2\frac{N}{\lambda} + 2b\lambda + 2\lambda - 6\,$, with the extra $2\lambda - 6$ Toffolis coming from having to do unary iteration over the $\ket{r}$ register for each multiplexed copy which has cost $\lambda - 3$.\\

However, it is possible to further reduce the Toffoli count of this approach to $\,2\frac{N}{\lambda} + 2b(\lambda-1) + 2\lambda - 6\,$ and dirty qubit count to $b(\lambda-1)$ using the following modification: instead of running over $r=0,...,\lambda -1$ when copying the $r^{th}$ register into the clean register, run over $r=1,...,\lambda -1$. Then modify the first load based on the state of the $\ket{q}$ register to directly load $f(q,0)$ into the clean register and $f(q,r) \oplus f(q,0)$ into the $r^{th}$ dirty register for $r=1,...,\lambda -1$, with a similar modification for the second load, but this time only load $f(q,r) \oplus f(q,0)$ into the $r^{th}$ dirty register for $r=1,...,\lambda -1$. We've included the circuit diagram for our ``SelectCopy"  based construction as well as the previous ``SelectSwap" construction for comparison in \cref{fig:qrom_comparison_circ}.\\

Note that our construction also reduces the uncomputation cost of table lookup using measurement based uncomputation from $\,2\frac{N}{\lambda'} + 4\lambda'\,$ Toffolis using $\lambda' - 1$ dirty qubits \cite{berry2019qubitization} to  $\,2\frac{N}{\lambda'} + 2\lambda'\, -6$ Toffolis by treating it as a QROM with $b=1$ where we load 1 for all indices of the address register that need a phase correction and replacing the copying of the $r^{th}$ dirty qubit based on the state of the $\ket{r}$ register with an application of Pauli Z on the $r^{th}$ dirty qubit based on the state of the $\ket{r}$ register.\\

\noindent We now describe our algorithm optimizing back-to-back applications of QROM which forms the basis for our approach in halving the cost of QROM.

\subsection{Sequential QROMs}
Sequential applications of QROM is a common occurrence in the design of quantum algorithms. An example of this is block-encoding algorithms for the electronic structure Hamiltonian in the tensor hypercontraction (THC) factorized form \cite{lee2021even}, where one multiplexes over different basis rotations, each consisting of a large number of Givens rotations, based on the state of a register encoding the coefficients of the terms in the Hamiltonian.\\

In such scenarios, the unloading step based on the state of the $\ket{q}$ register can be combined with the loading step for the next application of QROM by instead loading $f(q,r)_{j-1} \oplus f(q,r)_{j}$, where $f(q,r)_{j-1}$ is the value we would've needed to unload in the previous step. Using this trick, we only need to perform a single unloading step once at the end regardless of the number of sequential QROMs being applied. Therefore, the total number of loads (the coefficient multiplying $\frac{N}{\lambda}$ in our Toffoli count) is $(m+1)$ for $m$ sequential QROMs. With more thought, a similar reduction can be applied to the second multiplexed copy for the fix up $\ket{\phi_r \oplus f(q,r)} \to \ket{\phi_r \oplus f(q,r)\oplus \phi_r}=\ket{f(q,r)}$. The idea is to introduce another $b$-bit ancillary register that caches $\ket{\phi_r}$ once at the beginning using a multiplexed copy, and then uses that for all subsequent fix ups using only CNOT gates, giving us a total Toffoli cost of
\begin{equation}
    (m+1)\frac{N}{\lambda} + (m+2)\left(b\,(\lambda-1)+\lambda-3\right),
\end{equation}
for $m$ back-to-back QROMs. If the sequential QROMs are writing to a new clean register instead of re-writing the value from the previous QROM (which will be the case for the idea presented in the next section), this initial caching is not needed, as the fix up for all $m$ applications can be done at once at the end giving us a further reduction to
\begin{equation}\label{eq:sequential}
    (m+1)\left(\frac{N}{\lambda} + b(\lambda - 1) +\lambda-3\right).
\end{equation}

\subsection{Halving the cost of QROM with sequential bit packets} \label{sec:halving_qrom}
We are now ready to present our technique for halving the cost of QROM. The idea is simple: when constrained by the number of dirty qubits available to speed up QROM, the dominant term in cost of QROM becomes $2\frac{N}{\lambda}$; however, based on the construction provided in the previous section for performing $m$ sequential QROMs, we find that the prefactor for this term becomes $(m+1)$ as opposed to $2m$. Hence we will utilize this fact to reduce the prefactor in front of $\frac{N}{\lambda}$ by treating a $b$-bit QROM as $\alpha$ sequential $\frac{b}{\alpha}$-bit QROMs. Without loss of generality, assume $N$, $b$, and $\alpha$ are powers of 2. This means that if we previously had $\sim b\lambda$ dirty qubits available, allowing for a ``SelectCopy" depth of $\lambda$, with $\frac{b}{\alpha}$-bit QROM we can now have a ``SelectCopy" depth of $\alpha\lambda$. Subbing this into the equation of \cref{eq:sequential} along with our new bit size of $\frac{b}{\alpha}$, we get that the Toffoli cost of $\alpha$ sequential $\frac{b}{\alpha}$-bit QROMs is
\begin{equation}\label{eq:bit_batch_load}
    (1+\frac{1}{\alpha})\frac{N}{\lambda} + (b+\frac{b}{\alpha})(\alpha\lambda - 1) + (\alpha+1)(\alpha\lambda-3),
\end{equation}
using $b\lambda - \frac{b}{\alpha}$ dirty ancillas. Setting $\alpha = 1$ recovers the cost from \cref{sec:select_copy}, while $\alpha = b$ achieves Toffoli cost
\begin{equation}
    (1+\frac{1}{b})\frac{N}{\lambda} + 2(b+1)(b\lambda - 2),
\end{equation}
which when $N \gg b^2 \lambda^2$ gives the cost
\begin{equation}
    \sim(1+\frac{1}{b})\frac{N}{\lambda}.
\end{equation}
This leads to a reduction of approximately $50\%$  over the previous state-of-the-art, and effectively matches the performance of clean-qubit QROM using dirty qubits. However, more generally $\alpha$ is a tunable parameter that can be set based on the value of $N$ and number of available dirty qubits to find the optimal Toffoli count for particular cases of interest. We provide a cost comparison between our improved method and the previous state-of-the-art in \cref{fig:qrom_comparison}.\\

\section{Conclusion}
Table lookup is one of the most widely used algorithmic subroutines in fault-tolerant quantum algorithms and accounts for the majority share of algorithmic overheads in almost all practical applications of quantum computers. Therefore, any reductions in its implementation cost has far reaching implications for the fields of quantum algorithms and applications. In this work, we provided the first major improvement to this foundational subroutine in over half a decade. By introducing a number of advanced yet intuitive techniques, we are able to effectively halve the cost QROM for regimes of practical interest and match the performance clean-qubit QROM using dirty qubits.\\

A future direction worth considering is the application of integer division as a pre-processing step to lift the restriction on $\lambda$ having to be a power of 2 \cite{low2024trading} in order to make full use of available dirty qubits when the budget cannot be written as $2^k  b$ for any integer $k$. This becomes particularly important in many practical regimes where qubit constraints prevent one from achieving the optimal $O(\sqrt{N})$ Toffoli scaling \cite{low2024trading}, and maximal use of available dirty qubits is in the best interest of algorithm designers. With the recent improvements in cost of integer division \cite{mukhopadhyay2026improved}, the incorporation of this idea with the framework presented here could potentially provide further reductions in specific regimes of interest.

\newpage

\bibliography{bib}
\newpage

\appendix

\section{Non-power of 2 bit-length}\label{sec:non_power}
Let $\mu$ be the number of bits loaded in each of the $\alpha$ groups/iterations. In more general applications of this algorithm, the problem parameters will not be powers of 2. In this section we lift the power of 2 constraint on $b, \mu$, and $N$. In this case, it can be more natural to describe the costs of this algorithm in terms of the number of bits we load per iteration $\mu$ rather than the number of iterations we perform $\alpha$. If we load $\mu$ bits per iteration, then clearly we need to iterate $\alpha = \ceil{b/\mu}$ times. For $\floor{b/\mu}$ iterations $\mu$ bits are loaded, and in the final iteration $(b \, \mathrm{mod} \, \mu)$ bits are loaded, which reduces the cost of the final copy operation. With this intuition, we provide the Toffoli cost in the following Theorem. 

\begin{theorem}\label{thm:mu_bit_qrom}
    Given a boolean function $f :  \mathbb{Z}_N \to \mathbb{Z}_2^b$ and $\lambda$ a power of 2 satisfying $1 < \lambda < N$, then a circuit to perform the transformation in Equation \ref{eq:data_load} can be constructed using 
    \begin{equation} \label{eq:non_2_cost}
        (\ceil{b/\mu}+1)\Bigr(\ceil{N/\lambda}+\lambda -3 \Bigr) + (\lambda-1)\Bigr( \mu (\floor{b/\mu}+1) +b \, \mathrm{mod}\, \mu \Bigr)
    \end{equation}
    Toffoli gates, $\mu(\lambda-1)$ dirty ancilla, $\max\{\ceil{\log(N/\lambda )}, \log(\lambda)\}$ clean ancilla qubits, and $b$ clean qubits to store the output.
\end{theorem}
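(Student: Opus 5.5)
The proof is a constructive resource count. We build the circuit from the sequential ``SelectCopy'' QROM of \cref{eq:sequential}, specialised so that each iteration writes into a fresh clean slice of the output register. Write $x = q\lambda + r$ with $q \in \{0,\dots,\ceil{N/\lambda}-1\}$ and $r \in \{0,\dots,\lambda-1\}$. When $\lambda \nmid N$, pad $f$ with zeros for $x \ge N$, or simply omit those branches of the unary iteration. Either way the $q$-iteration costs $\ceil{N/\lambda}$ Toffolis per pass, since unary iteration over $Q$ indices costs $Q-1$ with a control, or $Q$ when charged uniformly. Partition the $b$ output bits into $\alpha=\ceil{b/\mu}$ packets. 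The first $\floor{b/\mu}$ packets have $\mu$ bits each, and a final packet has $b \bmod \mu$ bits; when $\mu \mid b$ the final packet is empty, and we treat the corresponding iteration as not present, consistent with $\ceil{b/\mu}=\floor{b/\mu}$. The dirty register consists of $\lambda-1$ blocks of $\mu$ qubits each, giving $\mu(\lambda-1)$ dirty ancillas.

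The circuit alternates loads and copies.
\begin{itemize}
\item $\textsc{Sel}_1$ writes packet $1$ of $f(q,0)$ directly into output slice $1$, and XORs $f_1(q,r)\oplus f_1(q,0)$ into dirty block $r$ for $r\ge 1$.
\item For $j=1,\dots,\alpha$, a multiplexed copy controlled on $\ket{r}$, running over $r=1,\dots,\lambda-1$, XORs dirty block $r$ into output slice $j$.
\item Each intermediate $\textsc{Sel}_{j+1}$ XORs $\bigl(f_j(q,r)\oplus f_j(q,0)\bigr)\oplus\bigl(f_{j+1}(q,r)\oplus f_{j+1}(q,0)\bigr)$ into dirty block $r$, and also loads $f_{j+1}(q,0)$ directly into slice $j+1$. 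This merges the unload of iteration $j$ with the load of iteration $j+1$.
\item A final $\textsc{Sel}$ unloads the last packet, restoring every dirty block to its original value $\phi_r$.
\end{itemize}
The dirty contamination is removed by the Restore step of Appendix~\ref{sec:restore}. After the final unload, a multiplexed copy of block $r$ into every output slice (XORing $\phi_r$) cancels the $\phi_r$ that each earlier copy deposited. Fanning out to all slices costs one Toffoli per $r$ per bit of target, or equivalently one per $r$ per target using CNOT fan-out.

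Correctness is a check of XOR bookkeeping, done by tracking each branch $\ket{q}\ket{r}$. For $r=0$ no copy fires, so slice $j$ holds $f_j(q,0)$, which is correct. For $r\ge1$, after iteration $j$ the dirty block holds $\phi_r \oplus f_j(q,r)\oplus f_j(q,0)$. Slice $j$ therefore receives $f_j(q,0)\oplus\phi_r\oplus f_j(q,r)\oplus f_j(q,0)=\phi_r\oplus f_j(q,r)$, and the Restore copy removes the $\phi_r$. Because each copy acts on a distinct clean slice, the copies do not interfere with one another.

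The counting goes term by term against \eqref{eq:non_2_cost}.
\begin{itemize}
\item There are $\alpha+1$ $q$-selects, and each multiplexed copy needs a unary iteration over $r$. There are $\alpha+1$ copy phases ($\alpha$ forward phases plus one Restore phase), each costing $\lambda-3$ for the unary iteration. Together these give $(\ceil{b/\mu}+1)(\ceil{N/\lambda}+\lambda-3)$.
\item The controlled-CNOT Toffolis come from the copies. The $\floor{b/\mu}$ full packets cost $\mu(\lambda-1)$ each and the final partial packet costs $(b\bmod\mu)(\lambda-1)$, for a total of $b(\lambda-1)$.
\item The Restore phase contributes $\mu(\lambda-1)$: it copies one $\mu$-bit block per $r$, with fan-out to the slices done by CNOTs controlled on an intermediate clean $\mu$-bit copy.
\end{itemize}
Since $\mu\floor{b/\mu}+b\bmod\mu=b$, these copy terms sum to exactly $(\lambda-1)\bigl(\mu(\floor{b/\mu}+1)+b\bmod\mu\bigr)$. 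Clean ancilla usage is the unary-iteration workspace of the larger of the two address registers, $\max\{\ceil{\log(N/\lambda)},\log\lambda\}$; the two iterations are never simultaneous, so this workspace is reused. Together with the $b$ output qubits, this gives the stated qubit counts.

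The main obstacle is the Restore accounting. The construction must charge only $\mu(\lambda-1)$ Toffolis plus one unary iteration for removing $\phi_r$ from all $\alpha$ slices, not $b(\lambda-1)$. My plan is to XOR $\phi_r$ once into the last (or a reused) $\mu$-qubit slice and then propagate it to the other slices by CNOTs. Propagation by CNOT is possible because $\phi_r$ is the same in every slice, although it depends on $r$; the difference between slices is resolved by the copy being multiplexed in $r$. If the final packet is short, the surplus qubits must be handled by a clean $\mu$-qubit scratch slice, or by noting that output slices of full width already exist. This is the detail I would check most carefully against Appendix~\ref{sec:restore}.
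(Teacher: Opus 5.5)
Your construction and cost bookkeeping match the paper's everywhere except the one step that produces the $\mu(\lambda-1)$ term: the final fix-up that removes $\phi_r$ from all $\ceil{b/\mu}$ output slices. Your split of the unary-iteration cost ($\ceil{N/\lambda}$ for $q$ and $\lambda-3$ for $r$) differs from the paper's ($\ceil{N/\lambda}-1$ and $\lambda-2$), but the totals agree, so that is harmless. At the Restore step, however, you give three mutually inconsistent mechanisms, and none of them as stated meets the claimed resources.
\begin{itemize}
\item Fanning out with ``one Toffoli per $r$ per bit of target'' costs $b(\lambda-1)$, not $\mu(\lambda-1)$.
\item ``XOR $\phi_r$ once into one slice and then propagate it by CNOTs'' fails in that order. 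Once the pivot slice $p$ is cleaned it holds $f_p$, so CNOTing it into slice $j$ turns $f_j\oplus\phi_r$ into $f_j\oplus\phi_r\oplus f_p$.
\item ``CNOTs controlled on an intermediate clean $\mu$-bit copy'' leaves that scratch register holding $[r\neq 0]\,\phi_r$, entangled with the dirty qubits and with $\ket{r}$. Returning it to $\ket{0}^{\mu}$ needs either a second multiplexed copy (another $\mu(\lambda-1)$ Toffolis plus a unary iteration) or an $X$-measurement. The measurement's phase correction $(-1)^{m\cdot\phi_r}$ is multiplexed over $r$, so it needs another unary-iteration pass. Either way it also uses $\mu$ clean qubits beyond what the theorem allows.
\end{itemize}

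The missing ingredient is to do the fan-out inside each branch of the single unary iteration over $r$, using temporary ANDs, which is what the paper does. For each $r\ge 1$ and each bit $j'<\mu$, let $u_r$ be the sawtooth qubit for $r$.
\begin{itemize}
\item Compute $t=u_r\wedge\phi_{r,j'}$ (one Toffoli).
\item Apply CNOTs from $t$ to every output position carrying $\phi_{r,j'}$, i.e.\ bit $j'$ of each slice.
\item Uncompute $t$ by measurement with a Clifford $CZ(u_r,\phi_{r,j'})$ correction, which is available while $u_r$ is still live.
\end{itemize}
This costs exactly $\mu(\lambda-1)$ Toffolis, using one temporary qubit at a time.

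Alternatively, your observation that the same $\phi_r$ sits in every slice can be made to work with a CNOT sandwich around a full-width pivot slice $p$.
\begin{itemize}
\item First CNOT slice $p$ into every other slice, giving $f_j\oplus f_p$ because the two copies of $\phi_r$ cancel.
\item Then multiplexed-copy dirty block $r$ into slice $p$ for $r\ge1$, giving $f_p$ at $\mu(\lambda-1)$ Toffolis.
\item Finally CNOT slice $p$ into the others again, giving $f_j$.
\end{itemize}
This also works in the $r=0$ branch, needs no scratch qubits, and handles a short final packet as long as the pivot is one of the full slices.
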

\begin{proof}
    For compactness let $f_{x,j}$ be the $j$th bit of the $b$-bit string $f_x:=f(x)$ that we wish to load. Also, let $c_{q\cdot \lambda+r,j}:= f_{q\cdot \lambda+r,j}\oplus f_{q\cdot \lambda,j}$ which is just the XOR between each data bit with the $(q\cdot  \lambda)$-th bit that is loaded in the top most register. Suppressing the $\ket q$ and $\ket r$ registers which do not change, and considering the non-trivial case where $r\neq 0$, the circuit then acts via the following steps:
    \begin{align}
    &\ket 0^b \bigotimes_{l=1}^{\lambda-1} \ket{\phi_l} \xrightarrow[]{\textsc{Sel}_1} \ket{f_{q\cdot \lambda,0}}...\ket{f_{q\cdot \lambda,\mu-1}} \ket 0^{b-\mu} \bigotimes_{l=1}^{\lambda-1} \bigotimes_{j=0}^{\mu-1} \ket{\phi_l \oplus c_{q\cdot \lambda +l,j}} \\
    &\xrightarrow[]{\textsc{Copy}} \ket{f_{q\cdot \lambda,0} \oplus c_{q\cdot \lambda +r,0} \oplus \phi_{r,0}}...\ket{f_{q\cdot \lambda,\mu-1} \oplus c_{q\cdot \lambda +r,\mu-1} \oplus \phi_{r,\mu-1}} \ket 0^{b-\mu} \bigotimes_{l=1}^{\lambda-1} \bigotimes_{j=0}^{\mu-1} \ket{\phi_{l,j} \oplus c_{q\cdot \lambda +l,j}} \\
    &= \ket{ f_{q\cdot \lambda +r,0} \oplus \phi_{r,0}}...\ket{f_{q\cdot \lambda +r,\mu-1}\oplus \phi_{r,\mu-1}} \ket 0^{b-\mu} \bigotimes_{l=1}^{\lambda-1} \bigotimes_{j=0}^{\mu-1} \ket{\phi_{l,j} \oplus c_{q\cdot \lambda +l,j}}  \\
    & \xrightarrow[]{ \textsc{Copy} \;\textsc{Sel}_2} \ket{f_{q\cdot \lambda +r,0} \oplus \phi_{r,0}}...\ket{f_{q\cdot \lambda +r,2\mu-1}\oplus \phi_{r,\mu-1}}  \ket 0^{b-2\mu} \bigotimes_{l=1}^{\lambda-1} \bigotimes_{j=0}^{\mu-1} \ket{\phi_{l,j} \oplus c_{q\cdot \lambda +l,j+\mu}} \\
    & \vdots \notag \\
    &\xrightarrow[]{ \textsc{Copy} \;\textsc{Sel}_{\floor{b/\mu}}} \ket{f_{q\cdot \lambda +r,0} \oplus \phi_{r,0}}...\ket{f_{q\cdot \lambda +r,\mu \floor{b/\mu} -1}\oplus \phi_{r,\mu-1}} \ket{0}^{(b \, \mathrm{mod} \, \mu)} \bigotimes_{l=1}^{\lambda-1} \bigotimes_{j=0}^{\mu-1} \ket{\phi_{l,j} \oplus c_{q\cdot \lambda +l,j+\mu\floor{b/\mu}-1}}.
\end{align}
At this point we have loaded the first $b- (b \, \mathrm{mod} \, \mu) = \mu \floor{b/\mu}$ bits of the data. If $b/\mu \in \mathbb{Z}$ then we are done. If not, we have one more operation that loads $(b \, \mathrm{mod} \, \mu)$ bits into $\lambda$ registers of that size (ignoring the extra $\mu - (b \, \mathrm{mod} \, \mu)$ available qubits in each of the $\lambda-1$ dirty qubit registers), giving $\ceil{b/\mu}$ iterations in total. For the restore operation, $\textsc{Sel}^\dagger$ simply loads the corrections to return the dirty qubit register to its original state $\bigotimes_{l=1}^{\lambda-1}\ket{\phi_l}$. In reference to Figure \ref{fig:restore}, the final \textsc{Copy} initializes one temp-AND for each dirty qubit state, controlled on the unary iteration over $\ket r$ and on said dirty qubit state $\ket{\phi_{r,[0, \mu-1]}}$. Controlled on this temporary qubit, we apply a CNOT to every location where this error occurs in the output register. This has the same Toffoli cost of the prior \textsc{Copy} operators at $(\lambda-1)\mu$. Compiling the cost of all select and copy operations yields the final cost. The cost pertaining to unary iteration over $\ket q$ and $\ket r$ are taken to be $(\ceil{N/\lambda}-1)$ and $(\lambda-2)$ respectively based on Ref. \cite{babbush2018encoding}.  
\end{proof}

In Equation \ref{eq:non_2_cost}, the first term is solely the cost of unary iteration over both registers, and the second term comes from the Toffoli gates required to copy qubit states. The costs used are for controlled unary iteration, but in principle \textsc{Copy} operations need not be controlled since the Restore operation serves as the adjoint. Furthermore, we derived Equation \ref{eq:non_2_cost} using $\mu(\lambda-1)$ dirty qubits, which can be significantly less than the aforementioned QROM with $b(\lambda -1)$. To adjust the construction based on loading batches of $\mu$ bits such that we use a comparable number of qubits, we can make the rescaling $\lambda' \to \lambda \floor{b/\mu}$. This upper bounds the dirty qubit usage by $(\lambda \floor{b/\mu}-1)\mu \leq \lambda b - \mu$, which compared to the prior construction uses at most $(b-\mu)$ additional qubits. Plugging in this rescaled $\lambda$ then gives
\begin{equation}
    (\ceil{b/\mu}+1)\biggr(\ceil{\frac{N}{\lambda\floor{b/\mu}}}+\lambda\floor{b/\mu} -3 \biggr) + (\lambda\floor{b/\mu}-1)\Bigr( \mu (\floor{b/\mu}+1) + b \, \mathrm{mod}\, \mu \Bigr), 
\end{equation}
which yields a leading factor $\approx N/\lambda$ as previously claimed. Recall that scaling $\lambda$ in this way may result in it not being a power of 2 (requiring rounding to the nearest power of 2 or applying integer division). Returning to the scenario considered in the main text, if we enforce that all parameters be a power of 2, and note that $\alpha = b/\mu$, then the expression above simplifies to exactly match the claim of Equation \ref{eq:bit_batch_load}. 

\section{Restore operation} \label{sec:restore}
An important part of the construction is realizing that the dirty qubit states that XOR the output bits repeat based on the choice of $\alpha$ or $\mu$. For example, if we load an 5-bit string using 3 iterations of 2 bits, prior to \textsc{Restore} we have:
\begin{equation}
    \ket{f_{x,0} \oplus \phi_{r, 0} }\ket{f_{x,1} \oplus \phi_{r, 1} }\ket{f_{x,2} \oplus \phi_{r, 0} }\ket{f_{x,3} \oplus \phi_{r, 1} }\ket{f_{x,4} \oplus \phi_{r, 0} }.
\end{equation}
The restore works by iterating over $\ket r$ and storing the state of $\phi_{r, j}$ in a temporary ancilla, and then applying \textsc{Cnot} to each qubit where this error occurs, which is deterministic based on the circuit construction (see Figure \ref{fig:restore}).

\begin{figure}[htbp!]
    \centering
    \ctikzfig{restore} 
    \caption{A circuit construction of the \textsc{Copy} operation required to perform the \textsc{Restore} operation highlighted outlined in Figure \ref{fig:one_bit_qrom}. The $\textsc{Sel}^\dagger$ operation simply unloads the previous data loaded into the dirty qubits, and the final copy uses the dirty qubit registers to clean up the data register as shown. The control on the $\ket r$ register is taken to be a single qubit control on the corresponding sawtooth wire from the unary iteration circuit in Figure 10 of Ref. Of course, in the full operation, $\mu$ temp-ANDs must be initialized for each outcome from unary iteration \cite{babbush2018encoding}.}
    \label{fig:restore}
\end{figure}

\end{document}